\documentclass[letterpaper, 10 pt, conference]{ieeeconf}
\IEEEoverridecommandlockouts
\usepackage{amsmath}
\usepackage{mathtools}
\usepackage{xfrac}
\usepackage{amssymb}
\usepackage{xfrac}
\usepackage{dsfont}
\usepackage{graphicx}
\usepackage[dvipsnames]{xcolor}
\usepackage{booktabs}
\usepackage[font=small,labelfont=bf]{caption}
\usepackage{macros}
\usepackage{cite}
\usepackage{url}
\usepackage{tikz}
\usetikzlibrary{arrows,arrows.meta,backgrounds,decorations,decorations.pathmorphing,positioning,fit,automata,shapes,patterns,plotmarks,calc}
\usepackage[ruled,vlined,linesnumbered]{algorithm2e}
\usepackage{newtxtext}
\usepackage{newtxmath}
\usepackage{bm}
\usepackage[small]{subfigure}
\makeatletter
\let\NAT@parse\undefined
\makeatother
\usepackage[colorlinks=false, hidelinks, backref=false]{hyperref}
\usepackage{cleveref}
\graphicspath{{figs/}}
\newif\ifarxiv
\arxivtrue
\ifdefined\buildarxiv\arxivtrue\fi
\title{\LARGE \bf
Differentiable Ternary Temporal Logic Semantics \\ with Polynomial Surrogate Networks
}
\author{Ryan Matheu, Sai Sandeep Damera, John S. Baras and Calin Belta%
\thanks{The authors are with the Institute for Systems Research, University of Maryland, College Park, USA. Emails: \{\texttt{rmatheu}, \texttt{sdamera}, \texttt{baras}, \texttt{calin}\}\texttt{@umd.edu}.}%
}
\definecolor{goldencheese}{HTML}{F8D3A0}
\definecolor{exblue}{HTML}{2E5F7F}
\definecolor{exred}{HTML}{DA364B}
\definecolor{exgreen}{HTML}{248C2F}
\begin{document}
\maketitle
\thispagestyle{empty}
\pagestyle{empty}
\begin{abstract}
    Temporal logic provides designers a formal reasoning tool for specifying complex spatio-temporal behaviors and tasks for autonomous systems. Many techniques exist for control synthesis according to temporal logic specifications, spanning a large variety of systems, including robotics. However, the vast majority of implementations are limited to Boolean temporal logics. Boolean temporal logic does not have an innate mechanism for quantifying abstention, rather verdicts are either definitively \textit{True} or \textit{False}. Recent work demonstrates that ternary logic is a capable formalism for reasoning about robotic behaviors, with the natural ability to classify uncertainty with the \textit{Unknown} literal. Ternary temporal logic is still in its infancy, with work limited to synthesis for linear systems and monitoring. This work proposes a probabilistic relaxation of ternary temporal logic that allows for value and gradient computation through the same network while retaining exact verdicts on actual data. The relaxed network can be used as an objective in gradient-based control synthesis for both offline and online robotic control applications. We demonstrate the utility of our framework with two robotic manipulator tasks, one offline and one online, involving intermediate checking and sequencing of subtasks to demonstrate the benefit of our approach in terms of correct and timely execution.
\end{abstract}
\section{Introduction\label{sec:introduction}}
    Temporal logics like Signal Temporal Logic (STL)~\cite{maler2004monitoring} have become prominent formalisms for specifying tasks for autonomous systems thanks to their rich expressivity, ability to specify time-dependent behaviors, and real-valued robustness semantics~\cite{donze2010robust}. Temporal Behavior Trees (TBTs)~\cite{schirmer2024temporal} extend STL with the \textit{Sequence} and \textit{Selector} operators from behavior trees~\cite{colledanchise2018behavior}, allowing designers to specify long-horizon tasks as a sequence of subtasks with fallbacks. However, the interpretation of STL and TBT formulas is binary, meaning the satisfaction of a formula is either \textit{True} or \textit{False}. This is limiting for robotic applications where a robot must reason about its task while it is still being executed. For example, in model predictive control (MPC), the satisfaction of a formula is evaluated over a partial trajectory ending at the prediction horizon. A task that has not been completed by the end of the horizon is neither satisfied nor violated, yet Boolean semantics force a verdict of \textit{False}.

    Ternary logic offers a natural remedy to this problem. Kleene's strong logic $K_3$~\cite{kleene1952introduction} introduces a third truth value \textit{Unknown} to represent the case where there is not enough information to determine whether a proposition is \textit{True} or \textit{False}. In previous work~\cite{matheu2026ternary}, we formulate the qualitative ternary semantics of STL and TBTs over partial trajectories, and demonstrate their utility as constraints in optimal control problems solved via mixed-integer programming. Mixed-integer formulations are exact, but suffer from high complexity as the horizon and number of subtasks grow, and are often restricted to linear or piecewise-linear dynamics~\cite{raman2014model, sadraddini2015robust}. Gradient-based methods using smooth approximations of the real-valued STL quantitative semantics scale to nonlinear systems~\cite{pant2017smooth, gilpin2020smooth, mehdipour2019arithmetic, haghighi2019control}, but do not naturally quantify abstention and have not been implemented for the TBT operators. To the best of our knowledge, there is no smooth representation of the ternary semantics of temporal logic, so its suitability for gradient-based control synthesis remains open.

    \begin{figure}[t]
        \centering
\definecolor{ovGreenF}{HTML}{E4F1DA}\definecolor{ovGreenD}{HTML}{3E7A2E}%
\definecolor{ovBlueF}{HTML}{DCE8F4}\definecolor{ovBlueD}{HTML}{2E5F7F}%
\definecolor{ovOrgF}{HTML}{FCE4CA}\definecolor{ovOrgD}{HTML}{D2772A}%
\definecolor{ovPurF}{HTML}{F3E0F2}\definecolor{ovPurD}{HTML}{8E3E8C}%
\definecolor{ovGreyF}{HTML}{F1F1F1}\definecolor{ovGreyD}{HTML}{7A7A7A}%
\definecolor{ovGrad}{HTML}{B23A3A}%
\definecolor{ovF}{HTML}{DA364B}\definecolor{ovU}{HTML}{A3A3A3}\definecolor{ovT}{HTML}{248C2F}%
\begin{tikzpicture}[
    >={Stealth[length=4pt,width=3pt]},
    font=\footnotesize,
    inp/.style={draw=ovGreenD, fill=ovGreenF, rounded corners=2pt, line width=0.6pt,
        minimum width=1.5cm, minimum height=0.56cm, inner sep=0pt},
    cst/.style={draw=ovGreyD, fill=ovGreyF, rounded corners=2pt, line width=0.6pt, dashed,
        minimum width=1.5cm, minimum height=0.56cm, inner sep=0pt},
    andg/.style={draw=ovBlueD, fill=ovBlueF, rounded corners=1.5pt, line width=0.6pt,
        minimum width=0.55cm, minimum height=0.46cm, inner sep=0pt},
    org/.style={draw=ovOrgD, fill=ovOrgF, rounded corners=1.5pt, line width=0.6pt,
        minimum width=0.55cm, minimum height=0.46cm, inner sep=0pt},
    wire/.style={->, draw=black!65, line width=0.5pt},
    pics/trithist/.style n args={3}{code={
        \fill[ovF] (-0.2,0) rectangle (-0.075,{#1*0.36});
        \fill[ovU] (-0.0625,0) rectangle (0.0625,{#2*0.36});
        \fill[ovT] (0.075,0) rectangle (0.2,{#3*0.36});
        \draw[black!55, line width=0.3pt] (-0.24,0) -- (0.24,0);
    }},
]

\node[anchor=west, text=black!70, inner sep=0pt] at (0.05,0.55) {random trits $(m,q)$};
\foreach \name/\lab/\y/\pf/\pu/\pt in {
        a0/a_0/-0.68/0.018/0.292/0.690,
        a1/a_1/-1.36/0.168/0.664/0.168,
        b1/b_1/-2.04/0.012/0.219/0.769,
        b2/b_2/-2.72/0.018/0.292/0.690} {
    \node[inp] (\name) at (0.8,\y) {};
    \node at ($(\name.west)+(0.3,0)$) {$\lab$};
    \pic[scale=1.1] at ($(\name.center)+(0.33,-0.2)$) {trithist={\pf}{\pu}{\pt}};
}
\foreach \name/\y in {u1/0, u2/-3.4} {
    \node[cst] (\name) at (0.8,\y) {};
    \node[text=black!75] at ($(\name.west)+(0.3,0)$) {$U$};
    \pic[scale=1.1] at ($(\name.center)+(0.33,-0.2)$) {trithist={0}{1}{0}};
}

\node[org]  (g1) at (2.55,-0.34) {$\vee$};
\node[org]  (g2) at (2.55,-1.02) {$\vee$};
\node[andg] (g3) at (2.55,-2.38) {$\wedge$};
\node[andg] (g4) at (2.55,-3.06) {$\wedge$};
\node[andg] (h1) at (3.75,-1.0)   {$\wedge$};
\node[andg] (h2) at (3.75,-2.4)   {$\wedge$};
\node[org]  (o)  at (4.85,-1.7)  {$\vee$};

\foreach \a/\b in {u1/g1, a0/g1, a0/g2, a1/g2, b1/g3, b2/g3, b2/g4, u2/g4}
    { \draw[wire] (\a.east) -- (\b.west); }
\foreach \a/\b in {g1/h1, g3/h1, g2/h2, g4/h2, h1/o, h2/o}
    { \draw[wire] (\a.east) -- (\b.west); }

\node[draw=ovPurD, fill=ovPurF, rounded corners=2pt, line width=0.6pt,
      minimum width=2.3cm, minimum height=2.0cm] (out) at (6.85,-1.7) {};
\draw[wire] (o.east) -- (out.west |- o.east);
\node[anchor=north, inner sep=0pt] at ($(out.north)+(0,-0.14)$) {$Z^\varphi_{0,2}$};
\pic[scale=2.0] at ($(out.center)+(0,-0.42)$) {trithist={0.001}{0.634}{0.366}};
\foreach \l/\dx in {$F$/-0.325, $U$/0, $T$/0.325}
    { \node[text=black!70, font=\scriptsize] at ($(out.center)+(\dx,-0.6)$) {\l}; }
\node[text=ovT, font=\scriptsize] at ($(out.center)+(0.425,0.0)$) {$0.37$};
\node[anchor=north] at ($(out.south)+(0,-0.06)$) {$\Pr(Z^\varphi{=}{+}1)=\tfrac12(m+q)$};

\node[draw=ovGrad, fill=ovGrad!8, rounded corners=2pt, line width=0.6pt,
      minimum width=1.5cm, minimum height=0.5cm, inner sep=0pt] (u) at (0.8,-4.2) {controls $\bm u$};
\draw[->, ovGrad, line width=0.7pt, densely dashed, rounded corners=4pt]
    ($(out.south)+(0,-0.5)$) |- (u.east);
\node[text=ovGrad, fill=white, inner sep=1.5pt] at (3.9,-4.2)
    {$\nabla_{\bm u}\log\Pr(Z^\varphi{=}{+}1)$};
\end{tikzpicture}
        \caption{Ternary temporal logic-equivalent logic gate network for $\varphi=\seq(\lozenge_{[0,1]}\mu_1,\,\square_{[0,1]}\mu_2)$ at partial horizon $t_2=2$. Each predicate is relaxed into a random trit, and its first two moments $(m,\,q)$ are propagated through the AND and OR gates in closed form, giving the distribution of the verdict $Z^\varphi_{0,2}$. Search windows that extend beyond the partial horizon contribute constant \textit{Unknown} inputs. The likelihood of satisfaction provides a smooth objective for gradient-based control synthesis.}
        \label{fig:overview}
    \end{figure}

    This work builds on the notion of using ternary logic gate networks as surrogates for temporal logic. In~\cite{damera2026polynomial}, we showed that ternary logic functions admit unique polynomial surrogates that are exact on the trits $\{{-}1,\,0,\,{+}1\}$, while being smooth by construction. In this paper, we demonstrate that the satisfaction of any ternary temporal logic formula over a partial trajectory can be unrolled into a Differentiable Ternary Logic Gate Network (DTLGN), providing a useful objective in gradient-based control synthesis.

    Our primary contributions in this work are:
    \begin{enumerate}
        \item
            We show that the satisfaction of any ternary temporal logic formula can be represented exactly as a DTLGN.
        \item
            We introduce a moment propagation scheme through DTLGNs that yields a smooth likelihood of satisfaction.
        \item
            We demonstrate gradient-based control synthesis from ternary temporal logic specifications on a 7-DoF robotic manipulator for both open-loop and closed-loop (MPC) tasks.
    \end{enumerate}

    This paper is organized as follows: Section~\ref{sec:related-work} describes work related to smooth temporal logic semantics and ternary logic gate networks. In Section~\ref{sec:preliminaries}, we provide the necessary preliminaries on ternary logic, temporal logic, and polynomial surrogates for logical functions. We formulate the main problem in Section~\ref{sec:problem-formulation} and present our technical solution in Section~\ref{sec:encodings}. Section~\ref{sec:experiments} demonstrates our framework with two control tasks using a simulated robotic manipulator.
\section{Related Work\label{sec:related-work}}
    \subsection{Temporal Logic and Smooth Qualitative Semantics}
        Signal Temporal Logic (STL)~\cite{maler2004monitoring} is a logical system for specifying spatio-temporal properties of real-valued signals, reasoning about the satisfaction of a formula with respect to a system trajectory. In addition to the qualitative (Boolean) semantics, STL has well-defined quantitative (robustness) semantics that measure how strongly a signal satisfies or violates a formula~\cite{donze2010robust}. However, computation of the robustness of a formula consists of nested $\min$ and $\max$ functions, meaning the nominal robustness measure is not smooth. Temporal Behavior Trees (TBTs)~\cite{schirmer2024temporal} extend STL with the \textit{Sequence} and \textit{Selector} operators derived from behavior trees. These new temporal operators segment a trajectory into consecutive subtasks or fallbacks, and are equipped with a corresponding robustness measure. In previous work~\cite{matheu2026ternary}, we formulate the ternary logic qualitative semantics for STL and TBTs and demonstrate their utility as constraints in optimal control problems solved via mixed-integer programming.
        
        Closed-loop control synthesis with STL formulas as constraints on the system trajectory is traditionally posed as an online mixed-integer program. The Boolean satisfaction of a formula is encoded as a hard constraint, or the robustness is added to the cost as an objective to be maximized~\cite{raman2014model, sadraddini2015robust}. Such formulations are exact, but suffer from high complexity. To enable gradient-based optimization, the non-smooth robustness measure with is fitted with smooth approximations of $\min$ and $\max$~\cite{pant2017smooth}, or alternative smooth under-approximations to preserve soundness~\cite{gilpin2020smooth}. Other approaches to smooth robustness semantics include a reformulation using the arithmetic-geometric mean~\cite{mehdipour2019arithmetic} or a cumulative formulation~\cite{haghighi2019control} allowing for smooth semantics by construction. All of these approaches focus on a smooth \textit{quantitative} measure of satisfaction. In contrast, we smoothen the \textit{qualitative} ternary verdict itself. Our surrogate agrees exactly with the hard semantics on the trits, admits an \textit{Unknown} verdict over partial trajectories, and covers STL as well as the TBT operators.

    \subsection{Ternary Logic Gate Networks and Reparameterization}
        In previous work~\cite{damera2026polynomial}, we demonstrate that general $k$-ary ternary logic functions can be constructed as logic gate networks with learnable parameters. Each neuron represents a two-input ternary logic gate with nine learnable coefficients, covering the full gate library consisting of $3^{3^2}$ unique functions. As an extension to that work, we demonstrate that Differentiable Ternary Logic Gate Networks (DTLGNs) are capable of learning to causally predict STL verdicts online, choosing to abstain from producing a verdict in the absence of sufficient information~\cite{sandeep2026stability}. This work extends the notion of using ternary logic gate networks as surrogates for temporal logic monitoring, creating a representation for the satisfaction of a formula that is smooth by construction and provides exact verdicts on actual input traces.

        The Gumbel-Softmax~\cite{jang2016categorical} and Concrete~\cite{maddison2016concrete} distributions provide a continuous relaxation of categorical random variables, allowing discrete variables to be represented by a smooth, temperature-controlled distribution through which gradients can be propagated. In this work, our predicate relaxation follows from the same idea: logistic noise added to the predicate function turns each hard verdict into a random ternary variable, very much in the spirit of differentiable programming~\cite{blondel2024elements}, where differentiable relaxations such as smoothing programs have been interpreted in a probabilistic framework. Rather than sampling, we propagate the moments of these ternary variables analytically, which gives a deterministic objective and exact gradients.

\section{Preliminaries\label{sec:preliminaries}}

    \subsection{Ternary Temporal Logic}
        Kleene's strong logic ($K_3$)~\cite{kleene1952introduction} proposes a third truth value \textit{Unknown} to represent indeterminacy in computation or reasoning, concretizing the case when a proposition is neither \textit{True} nor \textit{False}.

        \begin{definition}[$K_3$ Syntax and Semantics]
            \label{def:k3}
            Let $P$ be a set of propositional variables that take values in the set $\{False,\,Unknown,\,True\}$ (abbr. $\{F,\,U,\,T\}$). A formula in $K_3$ is defined inductively as: 1) every $p\in P$ is a formula; 2) if $\varphi$ is a formula, then $\neg \varphi$ is a formula; 3) if $\varphi_1$ and $\varphi_2$ are formulas, then $\varphi_1\wedge \varphi_2$ and $\varphi_1\vee \varphi_2$ are formulas. Kleene's strong interpretation of the connectives $\neg$, $\wedge$, and $\vee$ is defined as:
            
            \begin{center}
                \begin{tabular}{c|c}
                    $\neg$ & \\
                    \hline
                    $F$ & $T$ \\
                    $U$ & $U$ \\
                    $T$ & $F$
                \end{tabular}
                \quad
                \begin{tabular}{c|ccc}
                    $\wedge$ & $F$ & $U$ & $T$ \\
                    \hline
                    $F$ & $F$ & $F$ & $F$ \\
                    $U$ & $F$ & $U$ & $U$ \\
                    $T$ & $F$ & $U$ & $T$
                \end{tabular}
                \quad
                \begin{tabular}{c|ccc}
                    $\vee$ & $F$ & $U$ & $T$ \\
                    \hline
                    $F$ & $F$ & $U$ & $T$ \\
                    $U$ & $U$ & $U$ & $T$ \\
                    $T$ & $T$ & $T$ & $T$
                \end{tabular}
            \end{center}
        \end{definition}
        Let $\mathbb{K}=\{F,\,U,\,T\}$ be the set of ternary truth constants. $K_3=(\mathbb{K},\,\wedge,\,\vee,\,\neg,\,F,\,T)$ forms a Kleene algebra\footnote{A De Morgan algebra whose elements satisfy $x\wedge\neg x\leq y\vee\neg y$. This is distinct from the Kleene algebra generalizing regular languages.}~\cite{kaarli2000polynomial} with $F<U<T$. Negation is an involution satisfying $\neg(x\wedge y)\,\leftrightarrow\,\neg x\vee\neg y$ for all $x,\,y\in\mathbb{K}$. Logical implication retains its usual definition: $x\rightarrow y:=\neg x\vee y$. Where $K_3$ diverges from Boolean algebras is in its failure to satisfy the complement laws, i.e., $x\wedge\neg x= F$ and $x\vee\neg x= T$. These laws fail in the case $x=U$.
        
        We consider the discrete-time domain as the set of all nonnegative integers: $\mathbb{T}:=\{t\in\mathbb{Z}: t\geq0\}$. The state of a dynamical system is represented by a real vector $x_t\in\mathbb{R}^n$ for time step $t\in\mathbb{T}$. A system trajectory is an infinite tuple of states $\bm{x}=(x_0,\,x_1,\,x_2,\dots)$, and a partial trajectory is a contiguous subsequence of an infinite system trajectory. Partial trajectories have a final time step, which we refer to as the partial trajectory horizon, or partial horizon. A partial trajectory ending at partial horizon $t_2$ is a finite tuple $\bm{x}_\text{partial}=(x_{0},\,x_{1},\dots,\,x_{t_2-1},\,x_{t_2})$ where $t_2\geq0$, i.e., a partial trajectory always contains at least one point. Because we are working with discrete time indices, a time interval is formally defined as the set $[a,\,b]:=\{t\in\mathbb{T}: a\leq t\leq b,\,a,b\in\mathbb{T}\}$. Note that this definition includes time intervals that contain a single time step, as is the case when $a=b$.
        
        The Boolean logic semantics of STL~\cite{maler2004monitoring} are defined over infinite trajectories, where the satisfaction of a formula is interpreted as either \textit{True} or \textit{False}. TBTs~\cite{schirmer2024temporal} extend STL by introducing new temporal operators inspired by the BT control nodes. Previously, we defined the qualitative ternary logic semantics of STL and TBTs and their equivalent mixed-integer encodings, quantifying satisfaction over partial system trajectories~\cite{matheu2026ternary}. Given a partial trajectory, it may be the case that there is not enough information to definitively determine the satisfaction of a formula. In this case we interpret the satisfaction of the formula to be \textit{Unknown}. A formula written in the Ternary Temporal Logic (TTL) fragment of STL and TBTs is defined inductively over the following grammar:
        \begin{align*}
            \varphi := \mu \mid {}& \neg\varphi \mid \varphi_1 \wedge \varphi_2 \mid \square_{[a,b]}\,\varphi \mid \lozenge_{[a,b]}\,\varphi \\
            \mid {}& \seq{(\varphi_1, \varphi_2)} \mid \sel{(\varphi_1, \varphi_2)}
        \end{align*}
        where $[a,\,b]$ is a time interval. Formulas are recursively defined starting with the signal predicates ($\mu$) and composed together using logical connectives (${\neg},\,{\wedge}$) and temporal operators. Other familiar logical connectives such as \textit{disjunction} (${\vee}$), \textit{implication} (${\rightarrow}$), and \textit{equivalence} (${\leftrightarrow}$) are defined using De Morgan's laws. $\square_{[a,\,b]}$ is the unary \textit{always} STL operator and $\lozenge_{[a,\,b]}$ is the unary \textit{eventually} STL operator. $\seq$ is the binary \textit{Sequence} TBT operator and $\sel$ is the binary \textit{Selector} TBT operator. Ternary signal predicates are functions $\mu:\mathbb{R}^n\to\{F,\,U,\,T\}$ with the general form:
        \begin{equation}
            \label{eq:pred}
            \mu(x_t):=\begin{cases}
                \;T, & f(x_t)\geq\delta, \\
                \;U, & -\delta<f(x_t)<\delta, \\
                \;F, & f(x_t)\leq-\delta,
            \end{cases}
        \end{equation}
        where $f:\mathbb{R}^n\to\mathbb{R}$ is a Lipschitz continuous function and $\delta>0$ is a user-defined uncertainty threshold. In the case where there is no uncertainty threshold, ternary signal predicates fall back to the Boolean sense where $\mu(x_t)=T\rightarrow f(x_t)\geq0$ and $\mu(x_t)=F\rightarrow f(x_t)<0$. As an example, a specification formulated as $\seq{(\lozenge_{[0,\,5]}(y\leq2),\,\lozenge_{[0,\,10]}\square_{[0,\,2]}(y\geq5))}$ states that $y$ must first become less than or equal to $2$ within $6$ time steps, then following completion of the first subformula, $y$ must become greater than or equal to $5$ within $11$ time steps and remain at or above $5$ for $3$ time steps.
        
        \begin{definition}[TTL Qualitative Semantics]
            \label{def:qual}
            Let the statement $\bm{x},\,t_1,\,t_2\models\varphi$ be the satisfaction of TTL formula $\varphi$ at time step $t_1$ over trajectory $\bm{x}$ with partial horizon $t_2$. The satisfaction at time step $t_1\leq t_2$ takes values in the ternary set of truth constants: $(\bm{x},\,t_1,\,t_2\models\varphi)\in\{F,\,U,\,T\}$. Given a trajectory $\bm{x}$ and a partial horizon $t_2$, the satisfaction of an STL formula at time step $t_1\leq t_2$ is defined inductively starting with the untimed connectives:
            \[
                \begin{aligned}
                    \bm{x},\,t_1,\,t_2\models\mu\;&\leftrightarrow\;\mu(x_{t_1}), \\
                    \bm{x},\,t_1,\,t_2\models\neg\varphi\;&\leftrightarrow\;\neg(\bm{x},\,t_1,\,t_2\models\varphi),\\
                    \bm{x},\,t_1,\,t_2\models\varphi_1\wedge\varphi_2\;&\leftrightarrow\;(\bm{x},\,t_1,\,t_2\models\varphi_1)\wedge(\bm{x},\,t_1,\,t_2\models\varphi_2),
                \end{aligned}
            \]
            where the logical connectives adhere to the rules in Def.~\ref{def:k3}. The notations $\neg(\bm{x},\,t_1,\,t_2\models\varphi)$ and $\bm{x},\,t_1,\,t_2\not\models\varphi$ are used interchangeably. Let the \textit{search window} of a timed operator be the time interval in which the operand is evaluated. For example, the statement $\bm{x},\,t_1,\,t_2\models\square_{[a,\,b]}\varphi$ is determined by evaluating $\varphi$ over the interval $[a+t_1,\,b+t_1]$, where $[a+t_1,\,b+t_1]$ is the search window. The timed operators elicit branching behavior dependent on the relationship between the search window and the partial horizon. The satisfaction of $\bm{x},\,t_1,\,t_2\models\square_{[a,\,b]}\varphi$ is equivalently:
            \[
                \begin{cases}
                    \;\displaystyle\bigwedge_{\tau=a+t_1}^{b+t_1}{(\bm{x},\,\tau,\,t_2\models\varphi)},\qquad\qquad&b+t_1\leq t_2, \\[5pt]
                    \;U\wedge\displaystyle\bigwedge_{\tau=a+t_1}^{t_2}{(\bm{x},\,\tau,\,t_2\models\varphi)},&b+t_1>t_2,                    
                \end{cases}
            \]
            and $\bm{x},\,t_1,\,t_2\models\lozenge_{[a,\,b]}\varphi$ is equivalently:
            \[
                \begin{cases}
                    \;\displaystyle\bigvee_{\tau=a+t_1}^{b+t_1}{(\bm{x},\,\tau,\,t_2\models\varphi)},\qquad\qquad&b+t_1\leq t_2, \\
                    \;U\vee\displaystyle\bigvee_{\tau=a+t_1}^{t_2}{(\bm{x},\,\tau,\,t_2\models\varphi)}, &b+t_1>t_2.
                \end{cases}
            \]
            The Sequence operator has the following semantics: $\bm{x},\,t_1,\,t_2\models\seq{(\varphi_1,\,\varphi_2)}$ is equivalently:
            \[
                \bigvee_{\tau=t_1}^{t_2-1}(\bm{x},\,t_1,\,\tau\models\varphi_1)\,\wedge\,(\bm{x},\,\tau+1,\,t_2\models\varphi_2),
            \]
            meaning there exists a partial trajectory satisfying $\varphi_1$ and a partial trajectory beginning at the next time step satisfying $\varphi_2$. The Selector operator has the following semantics: $\bm{x},\,t_1,\,t_2\models\sel{(\varphi_1,\,\varphi_2)}$ is equivalently:
            \[
                \bigvee_{\tau=t_1}^{t_2-1}(\bm{x},\,t_1,\,\tau\models\varphi_1)\vee(\bm{x},\,\tau+1,\,t_2\models\varphi_2).
            \]
            In other words, the Selector operator is satisfied if a partial trajectory satisfies $\varphi_1$ or the partial trajectory beginning at the next time step satisfies $\varphi_2$.

        \end{definition}

    \subsection{Polynomial Surrogates for Ternary Logic}
        Let the ternary truth constants be represented as \textit{trits}, integers taking values in the set $\{{-}1,\,0,\,{+}1\}$, where $True\mapsto {+}1$, $Unknown\mapsto 0$, and $False\mapsto {-}1$, thus preserving the order $F<U<T$. A real-valued surrogate function $\hat{f}:[{-}1,\,1]^2\to[{-}1,\,1]$ is said to represent a ternary logic function $f:\{{-}1,\,0,\,{+}1\}^2\to\{{-}1,\,0,\,{+}1\}$ if for all $z_1,\,z_2\in\{{-}1,\,0,\,{+}1\}$, the following holds true:
        \[
            \hat{f}(z_1,\,z_2)=f(z_1,\,z_2).
        \]
        Like Boolean functions that have multilinear representations, ternary functions admit unique real-valued multivariate polynomial representations whose individual terms have powers no greater than two~\cite{o2014analysis, damera2026polynomial}. By construction, these polynomial surrogates are $C^\infty$-smooth, and closed on $[{-}1,\,1]$. In fact, each of the $3^{3^2}$ two-input ternary logic functions $f:{\{{-}1,\,0,\,{+}1\}}^2\to\{{-}1,\,0,\,{+}1\}$ is represented by the polynomial $p_{\bm{w}} : \R^2 \to \R$ defined as:
        \[
            p_{\bm{w}}(x,\,y) = \bm{w}^\top\bm{m}(x,\,y),
        \]
        where $\bm{m}(x,\,y) = [1,\,x,\,y,\,xy,\,x^2,\,y^2,\,x^2y,\,xy^2,\,x^2y^2]^\top$ is a monomial basis and $\bm{w}\in \R^9$ are learnable coefficients. The $9\times9$ Vandermonde matrix $V$ evaluating these monomials at the grid points ${\{{-}1,\,0,\,{+}1\}}^2$ is invertible, so the mapping between coefficients $\bm{w}$ and truth table values $\bm{t}=V\bm{w}$ is a bijection~\cite{sandeep2026stability}. The real-valued polynomial surrogates corresponding to $K_3$ conjunction and disjunction are as follows:
        \begin{equation}
            \label{eq:polys}
            \begin{aligned}
                p_\wedge(x,\,y)&=\frac{1}{2}\left(
                    x+y+xy-x^2-y^2+x^2y^2
                \right), \\
                p_\vee(x,\,y)&=\frac{1}{2}\left(
                    x+y-xy+x^2+y^2-x^2y^2
                \right).
            \end{aligned}
        \end{equation}
        Given $z_1,\,z_2\in\{{-}1,\,0,\,{+}1\}$, the $K_3$ connectives are represented exactly by the polynomials above according to the mapping:
        \[
            \neg z\;:=\;-z,\quad z_1\wedge z_2:=p_\wedge(z_1,\,z_2),\quad z_1\vee z_2:=p_\vee(z_1,\,z_2),
        \]
        with general $k$-ary conjunction and disjunction defined as being right-nested:
        \[
            \bigwedge_{j=1}^{k}z_j:=p_\wedge\bigl(z_1,\,p_\wedge({\cdots},\,z_k)\bigr),\;\bigvee_{j=1}^{k}z_j:=p_\vee\bigl(z_1,\,p_\vee({\cdots},\,z_k)\bigr).
        \]

\section{Problem Formulation\label{sec:problem-formulation}}

    A discrete-time dynamical system has dynamics of the form:
    \begin{equation}
        \label{eq:sys-dyn}
        x_{t+1}=f(x_t,\,u_t)
    \end{equation}
    where $x_t\in\mathbb{R}^n$ is the state vector at time step $t\in\mathbb{T}$ and $u_t\in\mathcal{U}\subseteq\mathbb{R}^m$ is the control vector at time step $t$. We consider the optimal control of nonlinear systems of the form~\eqref{eq:sys-dyn}, where the task is provided as a TTL formula $\varphi$ on the system trajectory. The verdict trit $z^\varphi_{t_1,\,t_2}:=(\bm{x},\,t_1,\,t_2\models\varphi)\in\{{-}1,\,0,\,{+}1\}$ is piecewise constant with respect to the trajectory, meaning its gradient is zero almost everywhere and provides no information for gradient-based optimization. Mathematical optimization techniques like mixed-integer formulations avoid this issue by encoding the qualitative semantics as a set of mixed-integer constraints. However, direct optimization techniques suffer from high complexity and, for long horizon tasks, are limited to linear or piecewise-linear dynamics. We instead propose to lift the satisfaction of $\varphi$ to a smooth likelihood of satisfaction, retaining the exact ternary verdict and providing useful gradient information.

    \begin{prob}[Smooth TTL Semantics]
        \label{prob:smooth}
        Given a TTL formula $\varphi$ with satisfaction $z^\varphi_{t_1,\,t_2}$, determine a random verdict $Z^\varphi_{t_1,\,t_2}$ whose likelihood of satisfaction $\Pr(Z^\varphi_{t_1,\,t_2}={+}1)$ is smooth with respect to the predicate values $f(x_t)$, and recovers exact verdicts whenever the predicate trits are deterministic.
    \end{prob}

    A solution to Problem~\ref{prob:smooth} provides a sound objective for gradient-based control synthesis. As probabilities for cold-start gradient optimization may be very small and cause arithmetic underflow, we instead formulate the objective in log-space.

    \begin{prob}[TTL Control Synthesis]
        \label{prob:control}
        Given a system of the form~\eqref{eq:sys-dyn} with known initial state $\xi$, a horizon $N$, and a TTL formula $\varphi$, find a control sequence $\bm{u}=(u_0,\dots,u_{N-1})$ that solves:
        \begin{equation}
            \label{eq:ocp}
            \begin{aligned}
                \min_{\bm{u}}\;&\sum_{t=0}^{N-1}u_t^\top Ru_t-\log\Pr\bigl(Z^\varphi_{0,\,N}={+}1\bigr)\\
                \text{s.t.}\;&x_{t+1}=f(x_t,\,u_t),\quad u_t\in\mathcal{U},\quad x_0=\xi.
            \end{aligned}
        \end{equation}
    \end{prob}

    An optimal solution to~\eqref{eq:ocp} provides an open-loop control sequence with high likelihood of satisfying $\varphi$. Modern automatic differentiation software like JAX~\cite{jax2018github} provides the tooling required to compute analytical gradients of the objective with respect to the control input sequence $\bm{u}$.

\section{Polynomial Smoothing for TTL\label{sec:encodings}}    
    Exact agreement on predicate trits does not determine how a polynomial circuit behaves between those trits. For example, conjunction with an \textit{Unknown} input gives:
    \begin{equation}
        \label{psttl:eq:scalar-unknown}
        p_\wedge(x,\,0)=\tfrac12\bigl(x-x^2\bigr),\qquad
        \frac{\partial p_\wedge}{\partial x}(x,\,0)=\tfrac12-x.
    \end{equation}
    The scalar surrogate decreases as $x$ moves from $\sfrac12$ toward \textit{True}. Consequently, maximizing the scalar output can oppose increasing a predicate's truth value. \textit{Unknown} inputs occur structurally when temporal windows extend beyond the available trajectory, so this behavior is relevant to partial-horizon evaluation.
    
    We instead represent each relaxed predicate by a distribution over trits and propagate two moments through the circuit. The second moment distinguishes a certainly \textit{Unknown} trit from an equally weighted \textit{True}/\textit{False} mixture. Both have zero mean, but their second moments are zero and one, respectively. This distinction permits an objective that rewards \textit{True} without assigning partial credit to \textit{Unknown}.

    \subsection{TTL Networks}
        We now extend the representation of the $K_3$ connectives via polynomial surrogates to the full TTL language. A polynomial surrogate for a TTL formula is a function $f:{\{{-}1,\,0,\,{+}1\}}^k\to\{{-}1,\,0,\,{+}1\}$, where the $k$ inputs are derived from predicate valuations over time. That is, each TTL surrogate polynomial is a \textit{transducer}, mapping the valuation of input predicate trits to the satisfaction, violation, or abstention of a formula.
        \begin{example}
            Consider $\varphi=\seq\bigl(\lozenge_{[0,1]}\mu_1,\,\square_{[0,1]}\mu_2\bigr)$ with partial horizon $t_2=2$, and let $a_t=\mu_1(x_t)$ and $b_t=\mu_2(x_t)$. Unrolling the Sequence operator over $s\in[0,\,1]$ gives:
            \[
                z^\varphi_{0,2}=\bigl[(U\vee a_0)\wedge(b_1\wedge b_2)\bigr]\vee\bigl[(a_0\vee a_1)\wedge(U\wedge b_2)\bigr],
            \]
            where the constant $U$ literals arise from search windows that extend beyond the partial horizon. The corresponding DTLGN is shown in Fig.~\ref{fig:overview}.
        \end{example}
        Constructively, expanding the temporal operators over a fixed partial interval yields a finite $K_3$ expression over predicate trits and \textit{Unknown} constants. De Morgan's laws, double negation, and $\neg U=U$ allow this expression to be rewritten without changing its ternary value so that negation applies only to predicate inputs. This is its \emph{negation normal form} (NNF). The corresponding syntax tree has conjunctions and disjunctions at internal nodes and predicate literals or \textit{Unknown} constants at its leaves. We use this tree to construct the equivalent gate network inductively, from the leaves to the root.

        \begin{lemma}
            Given a TTL formula $\varphi$, the ternary satisfaction $z^\varphi_{t_1,\,t_2}$ can be represented exactly as a DTLGN consisting only of AND, OR, NOT, and constant \textit{Unknown} gates.
        \end{lemma}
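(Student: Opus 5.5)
The plan is a structural induction on the syntax of $\varphi$, proving a slightly stronger claim: for every subformula $\psi$ and every pair $t_1\le t_2$, the trit $z^\psi_{t_1,t_2}$ equals the output of a finite DTLGN whose leaves are predicate trits $\mu(x_\tau)$ with $\tau\le t_2$ or constant \textit{Unknown} gates, and whose internal nodes are $p_\wedge$, $p_\vee$, and negation $z\mapsto -z$. The strengthening quantifies over all $(t_1,t_2)$, not only the pair in the statement. This matters because the Sequence and Selector semantics evaluate $\varphi_1$ at shifted horizons $\tau<t_2$ and $\varphi_2$ at shifted start times $\tau+1$. The induction hypothesis must therefore supply networks for every such pair.

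The base case is a predicate: $z^\mu_{t_1,t_2}=\mu(x_{t_1})$ is itself an input trit, so the network is a single input wire.

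For negation and conjunction, I compose the inductively built networks with a NOT gate or a $p_\wedge$ gate. Correctness follows from Def.~\ref{def:qual} together with the fact, stated after \eqref{eq:polys}, that $-z$, $p_\wedge$, and $p_\vee$ agree exactly with the $K_3$ tables on $\{-1,0,+1\}$. Since every subnetwork outputs a genuine trit by hypothesis, the gate inputs are always grid points, and exactness propagates.

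For $\square_{[a,b]}$ and $\lozenge_{[a,b]}$, I split into the two cases $b+t_1\le t_2$ and $b+t_1>t_2$. In each case I take the finitely many subnetworks for $z^\psi_{\tau,t_2}$ over the appropriate range of $\tau$. In the second case I adjoin one constant $U$ gate. I then combine them with the right-nested $k$-ary conjunction or disjunction from the preliminaries. This nesting is a valid evaluation of the $K_3$ big conjunction or disjunction because $K_3$ $\wedge$ and $\vee$ are associative and commutative, being the lattice min and max under $F<U<T$. One degenerate case needs a remark: if $a+t_1>t_2$, the inner range is empty and the expression reduces to the constant $U$. Sequence and Selector are finite disjunctions over $\tau\in[t_1,t_2-1]$ of $p_\wedge$ or $p_\vee$ applied to the networks for $(\varphi_1;t_1,\tau)$ and $(\varphi_2;\tau+1,t_2)$. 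Both exist by the strengthened hypothesis, since $t_1\le\tau$ and $\tau+1\le t_2$.

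Finiteness follows because every unrolling ranges over a bounded index set determined by $t_1$, $t_2$, $a$, and $b$. Optionally, I can then push negations to the leaves using De Morgan's laws, double negation, and $\neg U=U$, which yields the NFF tree described before the lemma. Each rewrite preserves the $K_3$ value, and hence the network output on trits.

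The main obstacle I expect is the empty-disjunction corner of Sequence and Selector when $t_1=t_2$. There the range $[t_1,t_2-1]$ is empty and Def.~\ref{def:qual} leaves the value implicit. I would adopt the $K_3$ convention that the empty disjunction is $F$, which is encoded as the gate $-U$, and state this convention explicitly. The remaining care point is making the induction well-founded, since horizons shift. Inducting on formula size with $(t_1,t_2)$ universally quantified handles this.
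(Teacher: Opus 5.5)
Your structural induction on $\varphi$, with $(t_1,t_2)$ universally quantified, is a different route from the paper's proof. The paper assumes the unrolled NNF tree on $[t_1,t_2]$ exists, with only predicate literals and \textit{Unknown} constants at its leaves, and then inducts on tree height. Your induction instead proves the unrolling is finite and well founded despite the shifted horizons in $\seq$ and $\sel$. The base case, exact gate composition, right-nested unrolling, the reduction of empty $\square$/$\lozenge$ windows to $U$, and the optional NNF step are all fine.

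The gap is in the corner case you flagged yourself. The gate $-U$ does not encode $F$: negation is $z\mapsto -z$ and $-0=0$, i.e.\ $\neg U=U$. This gives wrong verdicts in ordinary nested cases, not only degenerate ones. Take $\seq(\mu_1,\seq(\mu_2,\mu_3))$ at $(t_1,t_2)=(0,2)$. The split $\tau=1$ uses the inner Sequence at $(2,2)$, which is an empty disjunction. With $\mu_1(x_0)=T$ and $\mu_2(x_1)=F$, the true verdict is $(T\wedge F)\vee(T\wedge F)=F$. Your network returns $(T\wedge F)\vee(T\wedge U)=U$.

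No other encoding rescues your convention within the stated gate set. Since $p_\wedge(0,0)=p_\vee(0,0)=0$, $-0=0$, and the constant gate outputs $0$, every network of AND, OR, NOT, and constant-\textit{Unknown} gates maps the all-$U$ input to $U$. Yet with the empty disjunction read as $F$, $\seq(\mu_1,\mu_2)$ at $t_1=t_2$ and $\seq(\mu_1,\seq(\mu_2,\mu_3))$ at $(0,1)$ are identically $F$, and their negations identically $T$. So under your convention the lemma as stated fails, and your induction cannot close.

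There are two ways to repair it:
\begin{enumerate}
\item Add a constant-$F$ gate, the constant polynomial $-1$. It is exact on trits and has moments $E(Z)=-1$, $E(Z^2)=1$.
\item Fix a convention for the empty split range that the gates can realize, such as $U$.
\end{enumerate}
The paper's proof tacitly presupposes something like the second option; it never addresses the empty range. Absorption ($F$ is the unit of $\vee$ and annihilates $\wedge$) removes most $F$ leaves, but it cannot handle verdicts that collapse to a constant.
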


        \begin{proof}
            Consider the NNF tree of $\varphi$ on $[t_1,t_2]$. We construct its network by induction on tree height.
            At a leaf, a positive predicate literal is realized by its input wire, a negative literal by a NOT gate on that wire, and $U$ by a constant \textit{Unknown} gate. These realizations return the prescribed trits.
            Suppose the two child subtrees of an internal node have exact realizations, with outputs $v_1,v_2\in\{-1,0,+1\}$. Connect their roots to the polynomial gate corresponding to the node's connective $\circ\in\{\wedge,\vee\}$. By \cref{eq:polys}, its output is
            \begin{equation*}
                p_\circ(v_1,v_2)=v_1\circ v_2,
            \end{equation*}
            which is precisely the parent subtree's verdict and is again a trit. This establishes the inductive step.
            The tree is finite, so the construction reaches its root. Since normalization preserves the TTL verdict, the resulting DTLGN outputs $z^\varphi_{t_1,t_2}$ exactly.
        \end{proof}       

    \subsection{Moment Propagation Through Elementary Gates}

        Let $Z$ be a \textit{random trit} with support $\{{-}1,\,0,\,{+}1\}$. Assuming that the mean $E(Z)$ is defined, we wish to compute the propagation of the mean through the elementary ternary logic gates. Each ternary gate has a closed-form polynomial representation that is valid on $\{{-}1,\,0,\,{+}1\}$, therefore computing the mean over the connectives is equivalent to computing the mean over the polynomial surrogates. Propagating expectations through the polynomial surrogates yields $E(U)=0$ for the constant \textit{Unknown} gate and $E(\neg Z)=-E(Z)$ for negation. Conjunction for independent trits propagates through the polynomial representation~\eqref{eq:polys} according to:
        \[
            E(Z_1\wedge Z_2)=\tfrac{1}{2}E(Z_1+Z_2+Z_1Z_2-Z_1^2-Z_2^2+Z_1^2Z_2^2),
        \]
        which, with $Z_1$ and $Z_2$ independent by hypothesis, is propagated to form:
        \begin{equation}
            \label{eq:con1}
            \begin{multlined}[.85\linewidth]
                \tfrac{1}{2}\bigl(E(Z_1)+E(Z_2)+E(Z_1)E(Z_2) \\
                -E\bigl(Z_1^2\bigr)-E\bigl(Z_2^2\bigr)+E\bigl(Z_1^2\bigr)E\bigl(Z_2^2\bigr)\bigr).
            \end{multlined}
        \end{equation}
        The presence of second moments suggests that they will need to be propagated as well. For the constant \textit{Unknown} gate and negation, we find: $E\bigl(U^2\bigr)=0$ and $E\bigl({(\neg Z)}^2\bigr)=E\bigl(Z^2\bigr)$ respectively. By direct calculation with $z_1,\,z_2\in\{{-}1,\,0,\,{+}1\}$, we have:
        \[
            {(z_1\wedge z_2)}^2=\tfrac{1}{2}\bigl(z_1^2+z_2^2-z_1-z_2+z_1^2z_2+z_1z_2^2\bigr),
        \]
        and therefore the second moment for conjunction $E\bigl({(Z_1\wedge Z_2)}^2\bigr)$ is propagated according to:
        \begin{equation}
            \label{eq:con2}
            \begin{multlined}[.85\linewidth]
                \tfrac{1}{2}\bigl(E\bigl(Z_1^2\bigr)+E\bigl(Z_2^2\bigr)-E(Z_1)-E(Z_2) \\
                +E\bigl(Z_1^2\bigr)E(Z_2)+E(Z_1)E\bigl(Z_2^2\bigr)\bigr).
            \end{multlined}
        \end{equation}
        
        Similarly, calculation of the first moment for a disjunction $E(Z_1\vee Z_2)$ propagates according to:
        \begin{equation}
            \label{eq:dis1}
            \begin{multlined}[.85\linewidth]
                \tfrac{1}{2}\bigl(E(Z_1)+E(Z_2)-E(Z_1)E(Z_2) \\
                +E\bigl(Z_1^2\bigr)+E\bigl(Z_2^2\bigr)-E\bigl(Z_1^2\bigr)E\bigl(Z_2^2\bigr)\bigr),
            \end{multlined}
        \end{equation}
        and by direct calculation with $z_1,\,z_2\in\{{-}1,\,0,\,{+}1\}$, we have:
        \[
            {(z_1\vee z_2)}^2=\tfrac{1}{2}\bigl(z_1^2+z_2^2+z_1+z_2-z_1^2z_2-z_1z_2^2\bigr),
        \]
        thus allowing the second moment of disjunction $E\bigl({(Z_1\vee Z_2)}^2\bigr)$ to be propagated according to:
        \begin{equation}
            \label{eq:dis2}
            \begin{multlined}[.85\linewidth]
                \tfrac{1}{2}\bigl(E\bigl(Z_1^2\bigr)+E\bigl(Z_2^2\bigr)+E(Z_1)+E(Z_2) \\
                -E\bigl(Z_1^2\bigr)E(Z_2)-E(Z_1)E\bigl(Z_2^2\bigr)\bigr).
            \end{multlined}
        \end{equation}
        
        We now have a closed-form expression for first and second moment propagation through the elementary ternary gates. By extension, the first and second moments for every TTL formula expressed as a polynomial surrogate ternary logic network can be computed using Eqs.~\eqref{eq:con1}--\eqref{eq:dis2}.

    \subsection{Predicate Relaxation and Maximum Likelihood}

        Ternary predicate streams are, by definition, discrete. For example, consider the generic ternary predicate function~\eqref{eq:pred} with $\delta>0$. The derivative $\sfrac{\partial\mu}{\partial f(x_t)}$ is zero almost everywhere and undefined at $f(x_t)=\pm\delta$. Therefore, gradients computed through predicates vanish and provide no useful information when synthesizing controls. Generating usable gradient information is an exercise in replacing each predicate trit with a smooth representation that remains faithful to the discrete verdict. So far, no assumptions on the distribution of predicate trits were made other than the existence of the first and second moments $E(Z^\mu)$ and $E\bigl({(Z^\mu)}^2\bigr)$. To proceed, we need a relaxation of the predicate distribution that is smooth with respect to $f(x_t)$ and preserves the information carried by the input predicate streams.
    
        A similar problem arises in machine learning with discrete random variables. Samples of a categorical variable are not differentiable with respect to its parameters. The Gumbel-Softmax~\cite{jang2016categorical} and Concrete~\cite{maddison2016concrete} distributions address this by perturbing the logits of a discrete variable with noise and replacing $\argmax$ with a temperature-controlled softmax, thereby allowing gradients to be propagated through the relaxed samples. We demonstrate that ternary predicates admit the same construction.
        
        A predicate trit is obtained by evaluating $f(x_t)$ with respect to a threshold $\pm\delta$. Perturbing $f(x_t)$ with logistic noise therefore yields closed-form probabilities for each of the three verdicts. Unlike~\cite{jang2016categorical, maddison2016concrete}, we do not relax the samples themselves, rather each predicate trit remains discrete, and the distribution instead depends smoothly on $f(x_t)$. Both moments $E(Z^\mu)$ and $E\bigl({(Z^\mu)}^2\bigr)$ can be computed exactly from this distribution, thus proving compatible with the established moment propagation formulation.
    
        If we assume that each predicate function is perturbed by a small disturbance $f(x_t)+\varepsilon_t$, where $\varepsilon_t$ is drawn independently from a zero-mean logistic distribution with scale $\tau/2$, then we find the likelihoods of satisfaction and violation follow:
        \[
            \begin{aligned}
                \Pr(Z^\mu={+}1)&=\Pr\bigl(f(x_t)+\varepsilon_t\geq\delta\bigr)=\sigma\!\left(\tfrac{2(f(x_t)-\delta)}{\tau}\right), \\
                \Pr(Z^\mu={-}1)&=\Pr\bigl(f(x_t)+\varepsilon_t\leq-\delta\bigr)=\sigma\!\left(-\tfrac{2(f(x_t)+\delta)}{\tau}\right).
            \end{aligned}
        \]
        With the identity $\sigma(2x)=\frac12(1+\tanh{x})$, the first two predicate moments are:
        \[
            \begin{aligned}
                E(Z^\mu)&=\tfrac{1}{2}\Bigl(\tanh{\tfrac{f(x_t)+\delta}{\tau}}+\tanh{\tfrac{f(x_t)-\delta}{\tau}}\Bigr), \\
                E\bigl({(Z^\mu)}^2\bigr)&=1-\tfrac{1}{2}\Bigl(\tanh{\tfrac{f(x_t)+\delta}{\tau}}-\tanh{\tfrac{f(x_t)-\delta}{\tau}}\Bigr).
            \end{aligned}
        \]
        \begin{figure}[htbp]
            \centering
            \includegraphics[width=\linewidth]{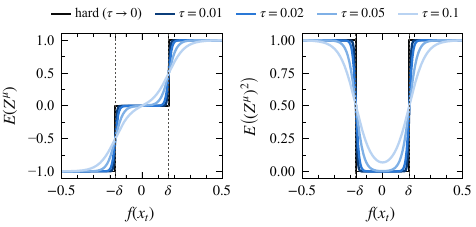}
            \caption{Example mean (left) and second moment (right) of a random predicate trit $Z^\mu$ as functions of the value $f(x_t)$ for various noise scales $\tau$.}
            \label{fig:leaf}
        \end{figure}
        
        \begin{thm}
            Given a TTL formula $\varphi$, under the predicate relaxation hypothesis, the likelihood of satisfaction of $\varphi$ can be computed using moment propagation through the $\varphi$-equivalent DTLGN. Furthermore, when the predicates are hardened to exact trits, the DTLGN computes the satisfaction of $\varphi$ exactly.
        \end{thm}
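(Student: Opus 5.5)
The plan has three steps: reduce the likelihood of satisfaction to the first two moments of the output trit, show those moments are obtained by propagating Eqs.~\eqref{eq:con1}--\eqref{eq:dis2} through the DTLGN of the preceding Lemma, and then specialize to deterministic inputs for the second claim.

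\textbf{Step 1: two moments determine a trit.} For any random trit $Z$ with $m=E(Z)$ and $q=E(Z^2)$ we have $m=\Pr(Z{=}{+}1)-\Pr(Z{=}{-}1)$ and $q=\Pr(Z{=}{+}1)+\Pr(Z{=}{-}1)$. Solving gives
\[
\Pr(Z{=}{+}1)=\tfrac12(q+m),\quad \Pr(Z{=}{-}1)=\tfrac12(q-m),\quad \Pr(Z{=}0)=1-q.
\]
So it suffices to show that the pair $\bigl(E(Z^\varphi_{t_1,t_2}),\,E((Z^\varphi_{t_1,t_2})^2)\bigr)$ at the root is what moment propagation returns. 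The likelihood is then $\tfrac12(m+q)$ evaluated at the root. Under the relaxation hypothesis, each leaf pair $(E(Z^\mu),E((Z^\mu)^2))$ is given by the $\tanh$ expressions above, which are smooth in $f(x_t)$. The root moments are polynomials in these leaf pairs, so the likelihood is smooth and Problem~\ref{prob:smooth} is addressed.

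\textbf{Step 2: propagation by induction on the NNF tree.} I take the NNF tree from the preceding Lemma and induct on its height.
\begin{itemize}
\item At the leaves, a predicate contributes $(E(Z^\mu),E((Z^\mu)^2))$. A negated predicate contributes $(-m,q)$. The constant $U$ contributes $(0,0)$.
\item At an internal $\wedge$ or $\vee$ node, Eqs.~\eqref{eq:con1}--\eqref{eq:dis2} give exactly the first and second moments of the output. They rely on two facts. First, the identities for $p_\circ$ and $(z_1\circ z_2)^2$ hold pointwise on trits. Second, the expectation of a product factorizes when the two child outputs are independent.
\item Independence holds whenever the two child subtrees have disjoint leaf sets, because the noises $\varepsilon_t$ are drawn independently for each predicate instance. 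The child outputs are then functions of disjoint independent variables.
\end{itemize}

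\textbf{Main obstacle: shared leaves.} Unrolling the temporal operators reuses the same predicate trit in several branches; in the Example, $a_0$ and $b_2$ each appear twice. At the root $\vee$ of that Example the two inputs are therefore dependent, and gate-wise factorization is not automatically exact. My plan is as follows.
\begin{itemize}
\item View the whole network as a polynomial $P$ in the distinct leaf trits.
\item Use $z^3=z$ and $z^4=z^2$ on $\{-1,0,+1\}$ to reduce $P$, and $P^2$, to polynomials in which each distinct leaf $z_i$ appears only through $z_i$ or $z_i^2$.
\item Take expectations of each reduced monomial. Distinct leaves are independent, so each expectation factorizes into products of leaf moments $m_i$ and $q_i$.
\end{itemize}
This yields the exact root moments in closed form. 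It coincides with gate-wise application of Eqs.~\eqref{eq:con1}--\eqref{eq:dis2} whenever the subtrees have disjoint leaves, and in particular always for formulas whose unrolling reuses no trit. Otherwise it is a symbolic "propagate, then reduce powers" procedure over the same network. I expect writing this dependency bookkeeping cleanly to be the hardest part. If it becomes too unwieldy, I would fall back to stating exactness under an explicit independence assumption on the gate inputs, and treat the shared-leaf case as an approximation.

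\textbf{Step 3: hardening.} If every leaf is a deterministic trit $z$, then $m=z$ and $q=z^2$. Every formula in Eqs.~\eqref{eq:con1}--\eqref{eq:dis2} then collapses to the pointwise identities for $p_\circ(z_1,z_2)$ and $(z_1\circ z_2)^2$. Induction on the tree gives root moments $(z^\varphi_{t_1,t_2},(z^\varphi_{t_1,t_2})^2)$. By the preceding Lemma, $z^\varphi_{t_1,t_2}$ is exactly the TTL verdict. With deterministic inputs, dependence between subtrees is irrelevant, since the expectation of a product of constants is trivially the product of their expectations. Finally, $\tfrac12(m+q)$ equals $1$ for a $T$ verdict and $0$ for a $U$ or $F$ verdict, so the likelihood of satisfaction recovers the exact verdict.
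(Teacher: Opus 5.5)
Your Steps 1 and 3 are precisely the paper's proof. The paper records $E(Z^\varphi)=\Pr(Z^\varphi={+}1)-\Pr(Z^\varphi={-}1)$ and $E\bigl((Z^\varphi)^2\bigr)=\Pr(Z^\varphi={+}1)+\Pr(Z^\varphi={-}1)$ and solves for the likelihood. It then asserts that the root moments come from applying Eqs.~\eqref{eq:con1}--\eqref{eq:dis2} gate by gate, and observes that $\tfrac12(z+z^2)$ is the indicator of $z={+}1$ on hardened trits. Your induction shows that each gate formula maps hardened pairs $(v_i,v_i^2)$ to $(v_1\circ v_2,(v_1\circ v_2)^2)$ regardless of dependence. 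That fills in what the paper leaves implicit, and it is correct.

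The real difference is Step 2, and there you are more careful than the paper. The paper never checks the independence hypothesis under which Eqs.~\eqref{eq:con1}--\eqref{eq:dis2} were derived. Your shared-leaf concern is substantive, not bookkeeping: under reconvergent fan-out, gate-local propagation is inexact. For instance, $\seq(\mu_1,\mu_2)$ with $t_1=0$, $t_2=2$ unrolls to $(a_0\wedge b_1)\vee(a_0\wedge b_2)$. If each of $a_0,b_1,b_2$ is $T$ with probability $\tfrac12$, the true likelihood is $\tfrac12\cdot\tfrac34=\tfrac38$, whereas gate-wise propagation gives $1-(\tfrac34)^2=\tfrac{7}{16}$.

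Hence the theorem's first claim, read as gate-local propagation for a single noisy trajectory, is exact only when the two inputs of every gate have disjoint leaf sets. Otherwise it is an approximation. Alternatively, it becomes exact for a modified model in which every leaf \emph{occurrence} of the unrolled tree receives its own independent noise. The hardening claim is unaffected either way.

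Your expand-and-reduce procedure ($z^3=z$, $z^4=z^2$, then factoring over distinct leaves) does return the exact moments. However, it is a global computation rather than the gate-local propagation the statement describes, and the reduced polynomial can have up to $3^k$ monomials in the number $k$ of distinct leaves. Your fallback, exactness under an explicit independence assumption with shared-leaf networks flagged as approximate, is therefore the accurate version of the first claim. The second claim holds unconditionally, as you show.
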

        \begin{proof}
            To construct a smooth objective that, when maximized, ensures a high likelihood of satisfaction of $\varphi$, we use the following observations:
            \[
                \begin{aligned}
                    E(Z^\varphi)&=\Pr(Z^\varphi={+}1)-\Pr(Z^\varphi={-}1), \\
                    E\bigl({(Z^\varphi)}^2\bigr)&=\Pr(Z^\varphi={+}1)+\Pr(Z^\varphi={-}1).
                \end{aligned}
            \]
            Combining the above two equations reveals that the probability of satisfaction of $\varphi$ is:
            \[
                \Pr(Z^\varphi={+}1)=\frac{E(Z^\varphi)+E\bigl({(Z^\varphi)}^2\bigr)}{2},
            \]
            where the moments $E(Z^\varphi)$ and $E\bigl({(Z^\varphi)}^2\bigr)$ are propagated through the DTLGN according to Eqs.~\eqref{eq:con1}--\eqref{eq:dis2}. We note that when the predicate trits are hardened to $\{{-}1,\,0,\,{+}1\}$, then $\frac{1}{2}\bigl(z^\varphi+{(z^\varphi)}^2\bigr)=\bm{1}(z^\varphi={+}1)$ exactly.
        \end{proof}
\section{Case Studies\label{sec:experiments}}

    Given a TTL formula $\varphi$, we have a constructive approach to build the equivalent DTLGN representing the satisfaction of $\varphi$ and contains only the elementary conjunction, disjunction, negation, and constant \textit{Unknown} gates. Moments are then propagated analytically through the DTLGN revealing a smooth closed-form expression for $\Pr(Z^\varphi={+}1)$ at any given time step over a fixed horizon. Now we demonstrate that $\log\Pr(Z^\varphi={+}1)$ provides a sound optimization objective for gradient-based control synthesis.

    \begin{table}[t]
        \centering
        \caption{Bin-picking specifications.}
        \label{tab:specs}
        \footnotesize
        \setlength{\tabcolsep}{4pt}
        \renewcommand{\arraystretch}{1.12}
        \begin{tabular}{@{}llcc@{}}
            \toprule
            \multicolumn{4}{@{}l}{\textbf{Open loop} ($N=140$)} \\
            \multicolumn{4}{@{}l}{\quad$\seq(\mathrm{Pre},\mathrm{Reach},\mathrm{Close},\mathrm{Lift},\sel(\mathrm{inspected},\mathrm{Scan}),\mathrm{Drop})\wedge\mathrm{Safe}$} \\[2pt]
            \multicolumn{4}{@{}l}{\textbf{Closed loop} ($H=50$)} \\
            \multicolumn{4}{@{}l}{\quad$\seq(\mathrm{Pre},\mathrm{Reach},\mathrm{Close},\sel(\mathrm{Place}_A,\mathrm{Place}_B))\wedge\mathrm{Safe}$} \\
            \midrule
            Stage & Condition & Open-Loop & Closed-Loop \\
            \midrule
            Pre & above part, gripper open & \checkmark & \checkmark \\
            Reach & at part, gripper open & \checkmark & \checkmark \\
            Close & at part, gripper closed & \checkmark & \checkmark \\
            Lift & above part, gripper closed & \checkmark & \\
            Scan & at camera, facing camera & \checkmark & \\
            Drop & at bin, gripper open, arm still & \checkmark & \\
            $\mathrm{Place}_A$ & at bin $A$ holding part, then open$^{*}$ & & \checkmark \\
            $\mathrm{Place}_B$ & at bin $B$ holding part, then open & & \checkmark \\
            Safe & $\square$(collision-free, closed in transit$^\dagger$) & \checkmark & \checkmark \\
            \bottomrule
        \end{tabular}
        \par\vspace{2pt}
        \parbox{\columnwidth}{\centering\scriptsize $^{*}$Only while bin $A$ is clear.\quad $^\dagger$Closed-loop must remain clear of the hand.}
    \end{table}
    
    \subsection{Open-Loop Control Synthesis}
        \begin{figure*}
            \centering
            \includegraphics[width=.9\textwidth]{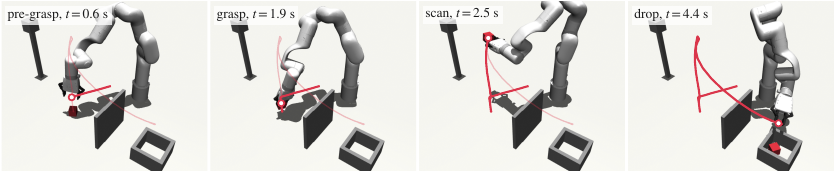}
            \caption{Open-loop plan for the uninspected cube with satisfaction likelihood $\Pr\bigl(Z^\varphi_{0,\,N}={+}1\bigr)=0.9998$. The tool path is shown in red.}
            \label{fig:ol-snapshots}
        \end{figure*}
        Consider a robotic manipulator tasked with picking items. An additional conditional constraint is added to the task: if an item has not previously been inspected, the robot must position the item in front of a scanner to be inspected before finally placing it into a bin. This task is formally represented as a TTL Sequence formula with the subtasks specified in Table~\ref{tab:specs}. The overall specification $\varphi$ is a conjunction with a safety specification stating that the robot must follow a collision-free trajectory at all times.

        We use a UFactory xArm7 $7$-DoF manipulator with its corresponding parallel gripper modeled in MuJoCo XLA (MJX)~\cite{todorov2012mujoco}. The state of the robot at any time is a vector $x_t=(q_t,\,\dot{q}_t,\,g_t)\in\mathbb{R}^{15}$ containing the joint positions, joint velocities, and gripper closure $g_t\in[0,\,1]$. The manipulator is controlled with direct joint torque commands and gripper closure commands $u_t=(\tau_t,\,u_t^g)\in\mathbb{R}^8$. We represent the satisfaction of $\varphi$ as a DTLGN and compute analytical moments through $Z^\varphi$ with noise scale annealed during solving from $\tau=0.02\to 0.003$ to formulate the log-likelihood of satisfaction as an objective. The overall optimal control problem is:
        \begin{equation}
            \label{eq:ol-ocp}
            \begin{aligned}
                \min_{\bm{u}}\;&\sum_{t=0}^{N-1}\tau_t^\top R\,\tau_t-\log\Pr\bigl(Z^{\varphi}_{0,N}={+}1\bigr)\\
                \text{s.t.}\;&x_{t+1}=f(x_t,\,u_t),\quad x_0=\xi,\\
                &|\tau_{t,i}|\leq\bar\tau_i,\quad u^g_t\in[0,1],\\
                &q_t\in[\underline{q},\,\bar q],\quad |\dot q_{t,i}|\leq 2\,\text{rad/s},
            \end{aligned}
        \end{equation}
        where $\bar{\tau}$ are the joint torque limits and $[\underline{q},\,\bar q]$ is the allowed joint limit range. Each predicate was assigned an uncertainty threshold of $\delta = 0.015$. The simulation runs for $N=140$ time steps with a time resolution of $\delta t=0.05$\,s. Problem~\eqref{eq:ol-ocp} is transcribed with multiple shooting in CasADi~\cite{andersson2019casadi} and solved with IPOPT~\cite{wachter2006implementation} using a limited-memory Hessian approximation. A satisfying plan was found after warm-starting from the best of $4$ random initializations. The reported likelihood of satisfaction reached $\Pr\bigl(Z^\varphi_{0,\,N}={+}1\bigr)=0.9998$ in $4.7$ minutes on an Ubuntu 24.04 machine with an Intel Core Ultra 9 285K CPU, an NVIDIA RTX 5090 GPU (32~GB), and 64~GB of RAM. A snapshot of the satisfying open-loop rollout is shown in Fig.~\ref{fig:ol-snapshots}.
        \begin{figure}[htbp]
            \centering
            \includegraphics[width=\linewidth]{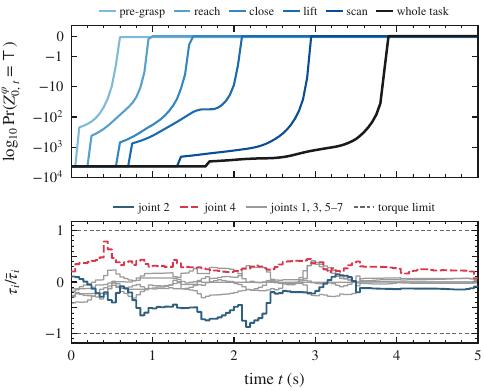}
            \caption{Log-likelihood (top) that each sequential subtask has been completed by the open-loop controller at time $t$ along the plan of Fig.~\ref{fig:ol-snapshots}. Joint torques (bottom) normalized by their maximum value. Each joint torque command remains within its respective limits.}
            \label{fig:ol-timeline}
        \end{figure}

    \subsection{Closed-Loop Control Synthesis}
        \begin{figure*}
            \centering
            \includegraphics[width=\textwidth]{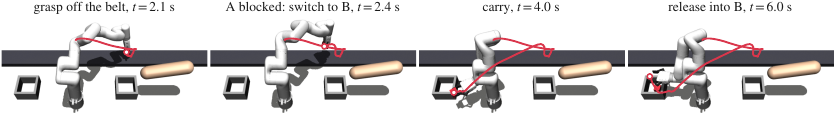}
            \caption{Rollout snapshots for the closed-loop controller. Presence of a human causes the manipulator to fall back and place the item into the secondary bin. The task specification is achieved with likelihood of satisfaction $\Pr\bigl(Z^{\varphi_\text{task}}_{0,\,N}={+}1\bigr)=0.9999$ and safety non-violation $\Pr\bigl(Z^{\varphi_\text{safe}}_{0,\,N}\neq{-}1\bigr)=0.7996$.}
            \label{fig:mpc-snap}
        \end{figure*}
        Consider the same robot manipulator from the open-loop problem. Now we task the robot with sorting items from a conveyor belt. The robot has two bins, $A$ and $B$, into which it can choose to sort items. The robot is tasked to try to place items first in bin $A$. However, a human coworker also sources items from bin $A$. The robot is constrained such that if a human is reaching into bin $A$, it should avoid contact and instead sort items into bin $B$. Because the environment is dynamic, we choose to solve the control problem online with an MPC loop. At each time step we solve the finite-horizon optimal control problem:
        \begin{equation}
            \label{eq:mpc}
            \begin{aligned}
                \min_{u_t,\dots,u_{t+H-1}}\;&\sum_{k=t}^{t+H-1}\tau_k^\top R\,\tau_k-\log\Pr\bigl(Z^{\varphi_\text{task}}_{0,\,t+H}={+}1\bigr) \\
                &\qquad\qquad\quad\;\;-\log\Pr\bigl(Z^{\varphi_\text{safe}}_{0,\,t+H}\neq{-}1\bigr)\\
                \text{s.t.}\;&x_{k+1}=f(x_k,\,u_k),\quad k=t,\dots,t+H-1,\\
                &|\tau_{k,i}|\leq\bar\tau_i,\quad u^g_k\in[0,1],\\
                &q_k\in[\underline{q},\,\bar q],\quad |\dot q_{k,i}|\leq 2\,\text{rad/s}, \\
                &x_t\text{ measured},
            \end{aligned}
        \end{equation}
        and the first element of the optimal action sequence $u_t$ is deployed and the state updated. The new state is measured and Problem~\eqref{eq:mpc} is solved again for the updated horizon. The MPC horizon is set to $H=50$ time steps with a time resolution of $\delta t=0.05$\,s. The task objective is evaluated on the measured history $(x_0,\dots,\,x_t)$ followed by the predicted states. Each MPC solve takes between $2.4$ to $3.5$\,s on average to compute on the same RTX 5090 machine, so the controller does not run in real time. Nevertheless, the overall task completed with verdict \textit{True} and likelihood of satisfaction $\Pr\bigl(Z^{\varphi_\text{task}}_{0,\,N}={+}1\bigr)=0.9999$. The safety specification verdict evaluates to \textit{Unknown} with a likelihood of non-violation $\Pr\bigl(Z^{\varphi_\text{safe}}_{0,\,N}\neq{-}1\bigr)=0.7996$ meaning the safety obligation was not violated, though it was never fully satisfied either. A snapshot of the satisfying closed-loop deployment is shown in Fig.~\ref{fig:mpc-snap}.
        \begin{figure}[htbp]
            \centering
            \includegraphics{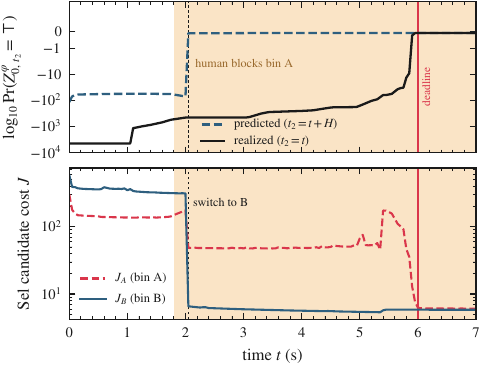}
            \caption{Log-likelihood for the actual realized rollout and the predicted inner-loop with respect to time. When a coworker reaches for bin $A$, the bin-$A$ candidate's cost rises above bin $B$'s and the manipulator switches to placing the item in bin $B$, resulting in satisfaction of the specification.}
        \end{figure}

\section{Conclusion\label{sec:conclusion}}
    This work is a continuation of our investigation into the use of ternary logic and ternary temporal logic for robotic applications. Ternary logic is a natural formalism for robotics as agents must be able to reason about their task while it is still being executed without committing to a definitive verdict. We demonstrate that any ternary temporal logic formula can be realized as a differentiable ternary logic gate network. Although innately smooth, naive predicate relaxation through a ternary logic gate network does not provide meaningful gradients for optimization. As a remedy, we propose to instead model the predicate trits as random trits whose distribution depends smoothly on the state, allowing for analytical moment propagation and a closed-form likelihood objective. The likelihood of satisfaction proves to be a useful and smooth objective in gradient-based control synthesis. Its utility is demonstrated in both an offline and an online control task for a robot manipulator, in which the task satisfaction was shown to evaluate to \textit{True} for both cases.
\bibliographystyle{mybibstyle}
\bibliography{refs}
\ifarxiv
\section{Appendix\label{sec:appendix}}
    We present the details of specifications used in both case studies (Table~\ref{tab:detailed-specs}), and the detailed experimental parameters and results for both case studies (Table~\ref{tab:results}).
    
    \begin{table*}
        \centering
        \caption{Case study specifications. Time bounds are in steps of $\delta t=0.05$\,s, and all predicates use $\delta=0.01$. The optimizer uses $\delta_{\mathrm{opt}}=0.015$ and, in the closed-loop experiment, every $\square$ window lengthened by 3 steps. Verdicts are certified on the specifications as listed.}
        \label{tab:detailed-specs}
        \footnotesize
        \renewcommand{\arraystretch}{1.15}
        \begin{tabular*}{\textwidth}{@{}l@{\hspace{6pt}\extracolsep{\fill}}ll@{}}
            \toprule
            & \textbf{Open-loop}: pick, inspect, and sort ($N=100$) & \textbf{Closed-loop}: conveyor pick with a moving hand ($H=50$, $N=160$) \\
            \midrule
            Pre & $\lozenge_{[0,N]}\square_{[0,2]}(\mathrm{at}(c_{\mathrm{pre}},0.03)\wedge\mathrm{open}\wedge\mathrm{aligned})$
                & $\lozenge_{[0,120]}\square_{[0,2]}(\mathrm{at}(c_{\mathrm{part}}{+}0.1e_z,0.03)\wedge\mathrm{open}\wedge\mathrm{aligned})$ \\
            Reach & $\lozenge_{[0,N]}\square_{[0,2]}(\mathrm{at}(c_{\mathrm{grasp}},0.025)\wedge\mathrm{open}\wedge\mathrm{aligned})$
                & $\lozenge_{[0,30]}\square_{[0,2]}(\mathrm{at}(c_{\mathrm{grasp}},0.025)\wedge\mathrm{open}\wedge\mathrm{aligned}\wedge\mathrm{belt})$ \\
            Close & $\lozenge_{[0,N]}\square_{[0,6]}(\mathrm{at}(c_{\mathrm{grasp}},0.025)\wedge\mathrm{closed}\wedge\mathrm{aligned})$
                & $\lozenge_{[0,20]}\square_{[0,6]}(\mathrm{at}(c_{\mathrm{grasp}},0.025)\wedge\mathrm{closed}\wedge\mathrm{aligned}\wedge\mathrm{belt})$ \\
            Lift & $\lozenge_{[0,N]}\square_{[0,2]}(\mathrm{at}(c_{\mathrm{pre}},0.03)\wedge\mathrm{closed})$ & -- \\
            Inspect & $\sel\bigl(\mathrm{inspected},\,\lozenge_{[0,N]}\square_{[0,10]}(\mathrm{at}(c_{\mathrm{scan}},0.04)\wedge\mathrm{facing})\bigr)$ & -- \\
            Place & $\lozenge_{[0,N]}\square_{[0,6]}(\mathrm{at}(c_{\mathrm{drop}},0.04)\wedge\mathrm{open}\wedge\mathrm{still})$
                & $\sel(P_A,P_B)$, \;$P_A=\seq\bigl(\lozenge_{[0,60]}\square_{[0,1]}(\mathrm{at}(c_A,0.03)\wedge\mathrm{hold}\wedge\mathrm{free}_A),$ \\
                & & \hspace{1.2em}$\lozenge_{[0,20]}\square_{[0,6]}(\mathrm{at}(c_A,0.03)\wedge\mathrm{open}\wedge\mathrm{still})\bigr)$; \;$P_B$: same with $c_B$, without $\mathrm{free}_A$ \\
            Task & $\seq$ of the stages above
                & $\seq$ of the stages above $\wedge\;\lozenge_{[0,120]}\bigl((\mathrm{at}(c_A,0.03)\wedge\mathrm{open}\wedge\mathrm{still}\wedge\mathrm{free}_A)$ \\
                & & \hspace{1.2em}$\vee\,(\mathrm{at}(c_B,0.03)\wedge\mathrm{open}\wedge\mathrm{still})\bigr)$ \\
            Safety & $\square_{[0,N]}\bigl(\mathrm{clear}\wedge\mathrm{tool}\wedge(\mathrm{high}\vee\mathrm{column})$
                & $\square_{[0,N]}\chi$, \;$\chi=\mathrm{clear}\wedge\mathrm{clear}_H\wedge\mathrm{tool}\wedge(\mathrm{high}\vee\mathrm{column})$ \\
                & \quad$\wedge\,(\neg\mathrm{open}\vee\mathrm{at}(c_{\mathrm{pre}},0.15)\vee(\mathrm{at}(c_{\mathrm{drop}},0.08)\wedge\mathrm{still}))\bigr)$
                & \quad$\wedge\,(\neg\mathrm{open}\vee\mathrm{belt}\vee(\mathrm{at}(c_A,0.08)\wedge\mathrm{still}\wedge\mathrm{free}_A)\vee(\mathrm{at}(c_B,0.08)\wedge\mathrm{still}))$ \\
            Objective & $\max\,\log\Pr(Z^{\mathrm{Task}\wedge\mathrm{Safety}}_{0,N}={+}1)$
                & $\max\,\log\Pr(Z^{\square_{[0,t+H]}\chi}_{0,t+H}={+}1)+\log\Pr(Z^{\mathrm{Task}}_{0,t+H}={+}1)$ at each step $t$ \\
            \midrule
        \end{tabular*}
    
        \renewcommand{\arraystretch}{1.1}
        \begin{tabular*}{\textwidth}{@{}l@{\extracolsep{\fill}}lllll@{}}
            \multicolumn{6}{@{}l}{\textbf{Predicates} (\textit{True} iff $f\geq\delta$)} \\[1pt]
            $\mathrm{at}(c,r)$ & $r-\|p-c\|$ &
            $\mathrm{open}$ & $0.1(0.25-g)$ &
            $\mathrm{closed}$ & $0.1(g-0.75)$ \\
            $\mathrm{down}$ & $\kappa(12^\circ-\angle(Re_3,-e_z))$ &
            $\mathrm{yaw}$ & $\kappa(15^\circ-|\psi-\psi^\star|_{90^\circ})$ &
            $\mathrm{aligned}$ & $\mathrm{down}\wedge\mathrm{yaw}$ \\
            $\mathrm{facing}$ & $\kappa(20^\circ-\angle(Re_3,v_{\mathrm{cam}}))$ &
            $\mathrm{still}$ & $0.1(0.4-\|\dot q\|)$ &
            $\mathrm{high}$ & $p_z-z_{\min}$ \\
            $\mathrm{clear}$ & $\min_{i,\mathcal{B}}(d_{\mathcal{B}}(s_i)-\rho_i)-0.02$ &
            $\mathrm{tool}$ & $\min_{\mathcal{B}}d_{\mathcal{B}}(p)-0.04$ &
            $\mathrm{column}$ & $0.03-\|p_{xy}-c_{xy}\|$ \\
            $\mathrm{clear}_H$ & $\min(\min_i d(s_i,\mathcal{H})-\rho_i,\,d(p,\mathcal{H})-0.04)-r_{\mathcal{H}}-0.05$ &
            $\mathrm{free}_A$ & $d(c_A,\mathcal{H})-r_{\mathcal{H}}-0.15$ &
            $\mathrm{hold}$ & $0.03-\|c_{\mathrm{grasp}}-p\|$ \\
            $\mathrm{belt}$ & $z_{\mathrm{belt}}+0.015-c_{\mathrm{part},z}$ &
            $\mathrm{inspected}$ & constant & & \\
            \bottomrule
        \end{tabular*}
    
        \vspace{2pt}
        \parbox{\textwidth}{\scriptsize $p$, $R$: tool position and orientation; $g\in[0,1]$: gripper closure; $\dot q$: joint velocities; $s_i$, $\rho_i$: link collision spheres; $d_{\mathcal{B}}$: signed distance to obstacle $\mathcal{B}$ (including the floor; $\mathrm{tool}$ uses only the bins in the closed-loop); $\mathcal{H}$, $r_{\mathcal{H}}$: forearm segment and radius of the human hand; $c_{\mathrm{part}}$: center of the moving part; $c_{\mathrm{grasp}}$: grasp point ($c_{\mathrm{part}}-0.013e_z$ in the closed-loop); $\psi$, $\psi^\star$: tool and object yaw, $|\cdot|_{90^\circ}$ wrapped to $(-45^\circ,45^\circ]$; $\kappa=0.2$\,m/rad. Distances in meters.}
    \end{table*}

    \begin{table*}[t]
    \centering
    \caption{Detailed simulation parameters and results. Verdicts are exact $K_3$ evaluations, probabilities
    are the smooth semantics with nominal $\delta=0.01$.
    Both experiments ran on NVIDIA RTX~5090 GPU, Intel Core Ultra~9 285K CPU.}
    \label{tab:results}
    \footnotesize
    \renewcommand{\arraystretch}{1.15}
    \begin{tabular*}{\textwidth}{@{}l@{\hspace{6pt}\extracolsep{\fill}}ll@{}}
        \toprule
        & \textbf{Open-loop}: pick, scan, and sort & \textbf{Closed-loop}: conveyor pick, receding horizon \\
        \midrule
        \multicolumn{3}{@{}l}{\textit{Problem}} \\
        Robot & \multicolumn{2}{@{}l}{UFactory xArm7 with gripper (MJX), torque control} \\
        State, input & \multicolumn{2}{@{}l}{$x=(q,\dot q,g)\in\R^{15}$, $|\dot q_i|\le2$\,rad/s;\;
            $u=(\tau,u_g)\in\R^{8}$, $|\tau_i|\le\bar\tau_i$, $\bar\tau=(50,50,30,30,30,20,20)$\,Nm} \\
        Dynamics & \multicolumn{2}{@{}l}{MJX, contact-free, smoothed Coulomb friction; $\delta t=50$\,ms} \\
        Specification & $\seq$ of 6 stages with $\sel(\mathrm{inspected},\mathrm{Scan})$
            & $\seq$ of 4 stages with $\sel(\mathrm{Place}_A,\mathrm{Place}_B)$, deadline 6\,s \\
        AST nodes / predicates & 60 / 17 & 82 / 23 (Task 58 / 12, Safe 24 / 11) \\
        Horizon & $N=100$ (5\,s) & $H=50$ (2.5\,s) per solve; 160 steps (8\,s) in closed-loop \\
        Environment & static part, scan station, two bins
            & belt at 0.08~m/s, forearm blocks bin $A$ for $t\in[2,7]$\,s \\
        \midrule
        \multicolumn{3}{@{}l}{\textit{Optimization}} \\
        Transcription & \multicolumn{2}{@{}l}{CasADi multiple shooting, IPOPT (L-BFGS), all derivatives by JAX and MJX} \\
        Objective & $\sum_t\tau_t^\top R\,\tau_t-\log\Pr(Z^{\varphi}_{0,N}={+}1)$
            & $\sum_t\tau_t^\top R\,\tau_t-\log\Pr(Z^{\square\mathrm{Safe}}_{0,t+H}={+}1)-\log\Pr(Z^{\mathrm{Task}}_{0,t+H}={+}1)$ \\
        Variables / equalities & 2315 / 1515 & 1165 / 765 per $\sel$ candidate (2 candidates per step) \\
        Relaxation & $\tau=0.005\to0.003$,\; $\delta_{\mathrm{opt}}=0.015$
            & $\tau=0.01$,\; $\delta_{\mathrm{opt}}=0.015$,\; hold windows padded by 3 steps \\
        IPOPT iterations & 1600 ($2\times800$, $\tau$ annealing) & $\le15$ per solve; 3 solves per candidate before the grasp \\
        Initialization & previous solution & shifted previous solution \\
        \midrule
        \multicolumn{3}{@{}l}{\textit{Computation}} \\
        Solve time & 4.7\,min (0.17\,s per iteration) & 2.52\,s mean per step \\
        \midrule
        \multicolumn{3}{@{}l}{\textit{Results}} \\
        Verdict (Task / Safe) & \textit{True}\ / \textit{True}\ on the MJX rollout of the plan
            & \textit{True}\ / \textit{Unknown}\ on the executed closed-loop \\
        Smooth probability & $\Pr(Z^{\varphi}_{0,N}={+}1)=0.9998$ ($\tau=0.003$)
            & Task $0.9999$;\; Safe $\Pr(Z\neq{-}1)=0.7996$ ($\tau=0.01$) \\
        Events & grasp 1.9\,s, release 4.4\,s & grasp 2.1\,s, switch $A\to B$ 2.4\,s, release 6.0\,s \\
        Min.\ clearance margin & obstacles 28.2\,mm & obstacles 34.7\,mm, human 155\,mm \\
        Peak $|\tau_i|/\bar\tau_i$, $|\dot q_i|$ & 0.87, 2.0\,rad/s & 0.99, 2.0\,rad/s \\
        \bottomrule
    \end{tabular*}
\end{table*}

\fi
\end{document}